\newcommand{\avg}[1]{\langle{#1}\rangle} 
\newcommand{\ket}[1]{| {#1} \rangle} 
\newcommand{\bra}[1]{\langle {#1} |} 
\DeclareDocumentCommand{\Tr}{m m O{\big}}{{\rm Tr}_{\:\!{#1}}#3({#2}#3)}
\DeclareMathOperator*{\tr}{\mathrm{Tr}}
\DeclareMathOperator*{\Var}{\mathrm{Var}}
\newtheorem{theorem}{Theorem}
\begin{document}
\title{On the central limit theorem for unsharp quantum random variables}
\author{Aleksandra Dimi\'{c}}
\affiliation{
Faculty of Physics,
University of Belgrade,
Studentski Trg 12-16,
11000 Belgrade, Serbia
}
\author{Borivoje Daki\'c}
\affiliation{
Institute for Quantum Optics and Quantum Information (IQOQI),
Austrian Academy of Sciences, Boltzmanngasse 3,
A-1090 Vienna, Austria
}

\date{\today}

\begin{abstract}

In this letter we study the weak-convergence properties of random variables generated by unsharp quantum measurements. More precisely, for a sequence of random variables generated by repeated unsharp quantum measurements, we study the limit distribution of relative frequency. We provide the de Finetti-type representation theorem for all separable states, showing that the distribution can be well approximated by mixture of normal distributions. No symmetry restrictions, such as the permutational invariance were needed. Furthermore, we investigate the convergence rates and show that the relative frequency can stabilize to some constant at best at the rate of order $1/\sqrt N$ for all separable inputs. On the other hand, we provide an example of a strictly unsharp quantum measurement where the better rates are achieved by using entangled inputs. This means that in certain cases the noise generated by the measurement process can be suppressed by using entanglement. We deliver our result in the form of quantum information task where the player achieves the goal with certainty in the limiting case by using entangled inputs or fails with certainty by using separable inputs.
\end{abstract}

\maketitle

\emph{Introduction.}$-$ Quantum theory predicts probability distributions of measurement outcomes. In practice, we identify probabilities with the relative frequencies of measurement outcomes in the limit of large number of experimental runs. The identification is justified by the so-called \emph{i.i.d. assumption}, which demands that a certain physical process (e.g. use of quantum channel or state) is repeated arbitrarily many times \emph{identically and independently} of other processes. The convergence to probability is guaranteed by the weak law of large numbers and the errors are quantified by the central limit theorem (CLT). However, one can naturally ask what happens if the i.i.d. assumption no longer applies? Clearly, such a framework is much less structured and it opens-up new possibilities and imposes new limitations both for quantum foundations and quantum information processing~\cite{Renner0,Hastings,Yard}.

In classical probability theory the weak convergence properties are fairly well understood for independent variables. The pioneering works by  Kolmogorov, Chebyshev, Lindeberg and Lyaponov provided a good set of conditions for CLT and asymptotic normality to hold (see for example \cite{Billingsley,Dudley}). On the other hand, dependent variables are much more difficult to tackle. Generally, one has to impose certain restrictions, otherwise there is not much to say in the most general case. For example, the set of sufficient conditions for CLT to hold can be provided for the weakly-dependent variables~\cite{Bradley}. Furthermore, for random variables under the symmetry constraints such as exchangeability, one can derive the exact necessary and sufficient conditions~\cite{Chernoff}. In quantum setting, apart from the i.i.d. scenario~\cite{Hudson1,Hudson2,Hayashi,Jaksic}, in recent years we have seen a plethora of CLT-type results mainly in the context of quantum many-body dynamics~\cite{Eisert,Buchholz,Anshu, Brandao1,Brandao2,Arous}. Yet, the main focus in these studies are the properties of quantum states and observables without counting the measurement effects. However, these effects are unavoidable in general. Our main concern here is to investigate how is the noise generated by the measurement process affecting the limiting distribution of measured quantities, i.e. in the asymptotic limit of many repeated measurement runs.

Our basic idea is that noise and uncertainty produced by generalized quantum measurement (POVM) will resemble some form of asymptotic normality even for correlated inputs. Unlike the standard CLTs where the distribution of the relative frequency is approximated by a single normal distribution, we will deliver our results in the form of de Finetti representation theorem~\cite{de Finneti}, where the output distribution is well approximated by a mixture of normal distributions. We will provide our representation theorem for all separable states. Contrary to the standard de Finetti representations, both in classical~\cite{de Finneti,Diaconis,Diaconis1} and quantum~\cite{Hudson,Caves,Christandl,Renner,Harrow,Li,Chiribella} scenario, our result \emph{does not require any symmetry constraint}, such as the permutational invariance (exchangeability), as long as the inputs are subjected to strictly unsharp measurements. Furthermore, we will investigate the convergence rates and show that the relative frequency can stabilize to some constant value at best at the rate of order $1/\sqrt N$ for all separable inputs. On the other hand, we will provide a simple example where the entangled inputs can significantly increase the convergence rate.
We deliver our result in the form of ``quantum game''~\cite{Bell} where the player is able to accomplish the task with certainty in the asymptotic limit by using entangled inputs or fails with certainty by using any separable inputs.

\emph{Unsharp quantum measurements.}$-$ The basic feature of generalized quantum measurements (POVMs) is the ``unsharpness'' and production of an unavoidable noise during the measurement process~\cite{MassarS}. The noise comes due to a non-projective character of the measurement elements (operators), thus the measurement outcomes will necessarily fluctuate in the sequence of repeated experimental runs. Our main goal here is to show that the fluctuations can be very different depending on whether separable or entangled inputs are subjected to unsharp measurements.

To begin, we start with some basic definitions. Consider a quantum measurement defined by the set of POVM elements $E_i$, with $E_i\geq0$ and $\sum_iE_i=\openone$. We define a random variable $X$ generated by measurement with the set of numbers $X\in\{x_1,x_2,\dots\}$ where each $x_i\in\mathbb{R}$ corresponds to the $i$th outcome (defined by $E_i$). It is convenient to define the expectation $\hat{X}=\sum_ix_iE_i$ and the uncertainty operator $\Delta \hat{X}=\sum_ix_i^2E_i-\hat{X}^2$~\cite{MassarS}. For a given quantum state $\rho$, the expectation value and variance are easily evaluated, i.e. $\avg{\hat X}_{\rho}=\tr\rho\hat{X}$  and $\sigma^2=\Var_{\rho}[X]=\avg{\hat{X}^2}_{\rho}-\avg{\hat X}_{\rho}^2+\tr\rho\Delta\hat{X}$, respectively. We see that the uncertainty operator produces additional noise that comes solely due to measurement (note that $\Delta\hat{X}\geq0$ in general). For all projective (von Neumann) measurements $\Delta\hat{X}=0$, hence this term vanishes.

We focus on strictly unsharp measurements, that is we assume $\sigma_{-}\leq\sigma\leq\sigma_{+}$, with $\sigma_{-}>0$ being strictly positive for all states $\rho$. Furthermore, we assume that the third moment $r=\avg{|\hat X-\avg{\hat X}_{\rho}|^3}_{\rho}\leq M$ is bounded by some constant $M>0$ for all $\rho$.

For a sequence of random variables $X_1,\dots, X_N$ generated by repeated measurement, we set $X^{(N)}=X_1+\dots+X_N$ and $R_N=\frac{1}{N}X^{(N)}$ to be the relative frequency. Furthermore, we define the standardly normalized sum $S_N=\frac{1}{\sqrt N}(X^{(N)}-\avg{X^{(N)}})$. The distribution of the relative frequency $R_N$ is the central object of our investigation, i.e. what is the probability that $R_N$ takes some value in the limit of large number of experimental runs.

\emph{Separable inputs.}$-$ The answer to the previous question heavily depends on the type of input state. For example, if one supplies in each run the same state $\rho$, the overall input state is described by an i.i.d. state $\rho^{(N)}=\rho^{\otimes N}$, where $N$ is the number of experimental runs. The weak law guarantees the convergence of the relative frequency converges to the mean value $\avg{\hat X}_{\rho}$ and the central limit theorem states that the distribution of $S_N$ converges to the standard normal distribution. A slightly more delicate example is the one of independent inputs, i.e. $\rho^{(N)}=\rho_{1}\otimes\dots\otimes \rho_{N}$, where $\rho_i$s are different in general. Here we can define the mean variance
\begin{equation}
\Sigma_N^2=\frac{1}{N}\sum_{i=1}^{N}\sigma_i^2,
\end{equation}
with $\sigma_i^2=\Var_{\rho_i}[X]$ and the average mean
\begin{equation}
\mu=\frac{1}{N}\sum_{i=1}^{N}\mu_i,
\end{equation}
with $\mu_i=\avg{\hat X}_{\rho_i}$. Clearly $\avg{X^{(N)}}=N\mu$ and $\sigma_{-}\leq\Sigma_N\leq\sigma_{+}$ as each individual variance is bounded. We can apply the \emph{Lindeberg's condition} for CLT~\cite{Billingsley}, i.e.
\begin{equation}
\max_i\frac{\sigma_i^2}{\sum_{j=1}^N\sigma_j^2}=\max_i\frac{\sigma_i^2}{N\Sigma_N^2}\leq\frac{\sigma_{+}^2}{ N\sigma_{-}^2}\rightarrow0,
\end{equation}
when $N\rightarrow+\infty$, therefore the normalized sum
\begin{equation}
\frac{X^{(N)}-\avg{X^{(N)}}}{(\sum_{i=1}^{N}\sigma_i^2)^{1/2}}=\frac{X^{(N)}-N\mu}{\sqrt N\Sigma_N}=\frac{S_N}{\Sigma_N}
\end{equation}
converges to the standard normal distribution. To quantify the deviation for finite $N$, we can use the Berry-Esseen theorem~\cite{Berry,Esseen}. Let $P[S_N/\Sigma_N\leq x]$ be the cumulative distribution function (CDF) and $\Phi(x)$ is the CDF of the standard normal distribution, i.e. $\Phi(x)=1/\sqrt{2\pi}\int_{-\infty}^{x}e^{-t^2/2}dt$. We have
\begin{eqnarray}\label{BEbound}\nonumber
\sup_{x\in\mathbb{R}}|P[S_N/\Sigma_N\leq x]-\Phi(x)|&\leq& C_0\frac{\sum_{i=1}^{N}r_i}{N^{3/2}\Sigma_N^3}\\\nonumber
&\leq& C_0\frac{N M}{N^{3/2}\sigma_{-}^3}=\frac{C_0M}{\sigma_{-}^3\sqrt N},\\
\end{eqnarray}
where $r_i=\avg{|\hat X-\avg{\hat X}_{\rho_i}|^3}_{\rho_i}\leq M$ and $C_0$ is an absolute constant. We see that any product input state is subjected to CLT because the measurements are strictly unsharp (the variance is strictly bounded from bellow by $\sigma_{-}$). From here, we are ready to establish the representation theorem for separable states. For a given separable input state $\rho^{(N)}=\sum_k\lambda_k\rho_k^{(N)}$, where $\rho_k^{(N)}=\rho_{1,k}\otimes\dots\otimes \rho_{N,k}$, we set $\mu_{N,k}=\frac{1}{N}\sum_{i=1}^{N}\mu_{i,k}$,
with $\mu_{i,k}=\avg{\hat X}_{\rho_{i,k}}$ and $\Sigma_{Nk}^2=\frac{1}{N}\sum_{i=1}^N\sigma_{i,k}^2$. Here $\sigma_{i,k}^2=\mathrm{Var}_{\rho_{i,k}}[X]$.
\begin{theorem}
The CDF $F_N(x)=P[R_N\leq x]$ of the relative frequency satisfies the following bound:
\begin{equation}\label{RTheorem}
\sup_{x\in\mathbb{R}}\left|F_N(x)-\sum_k\lambda_k\Phi\left(\frac{x-\mu_{N,k}}{\Sigma_{Nk}/\sqrt{N}}\right)\right|\leq \frac{C_0M}{\sigma_{-}^3\sqrt N}.
\end{equation}
\end{theorem}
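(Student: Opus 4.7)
The plan is to exploit the convexity furnished by the separable decomposition and then apply the independent-variable Berry-Esseen bound of Eq.~\eqref{BEbound} component-by-component, so no new probabilistic machinery is needed.

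First I would use the Born rule: because $\rho^{(N)}=\sum_k\lambda_k\rho_k^{(N)}$, every outcome probability is the $\lambda_k$-convex combination of the probabilities induced by the product states $\rho_k^{(N)}$. In particular, the CDF of the relative frequency splits as $F_N(x)=\sum_k\lambda_k F_{N,k}(x)$, with $F_{N,k}(x)=P_k[R_N\leq x]$ computed under $\rho_k^{(N)}$. This reduces the task to estimating $F_{N,k}$ uniformly in $k$.

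Second, for each $k$ the state $\rho_k^{(N)}=\rho_{1,k}\otimes\cdots\otimes\rho_{N,k}$ is a product, so under $\rho_k^{(N)}$ the outcomes $X_1,\dots,X_N$ are independent with means $\mu_{i,k}$, variances $\sigma_{i,k}^2\in[\sigma_{-}^2,\sigma_{+}^2]$ and third absolute moments bounded by $M$. The derivation leading to Eq.~\eqref{BEbound} therefore applies verbatim to every $\rho_k^{(N)}$ with the same constant $C_0 M/(\sigma_{-}^3\sqrt N)$ for all $k$. Rewriting the event $\{R_N\leq x\}$ as $\{S_N/\Sigma_{Nk}\leq (x-\mu_{N,k})/(\Sigma_{Nk}/\sqrt N)\}$, which is legitimate because $\avg{X^{(N)}}=N\mu_{N,k}$ under $\rho_k^{(N)}$, translates the Berry-Esseen statement into the pointwise bound $\sup_x|F_{N,k}(x)-\Phi((x-\mu_{N,k})/(\Sigma_{Nk}/\sqrt N))|\leq C_0 M/(\sigma_{-}^3\sqrt N)$.

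Finally, I would combine these two observations through the triangle inequality and $\sum_k\lambda_k=1$, which collapses the sum on the right and yields Eq.~\eqref{RTheorem}. Honestly, there is no real obstacle in this argument: the separable decomposition guarantees that every outcome statistic is an honest convex mixture of independent-input statistics, and the strict unsharpness assumption $\sigma_{-}>0$ together with $r\leq M$ supplies a Berry-Esseen constant that is uniform across mixture components. The conceptually important, and otherwise easy-to-overlook, point is that the mixture of normals in Eq.~\eqref{RTheorem} carries the \emph{original} weights $\lambda_k$; no permutational symmetrization or exchangeability is invoked, which is precisely the feature distinguishing this de Finetti-type statement from the standard one.
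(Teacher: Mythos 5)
Your proposal is correct and follows essentially the same route as the paper's own proof: decompose $F_N$ as the $\lambda_k$-convex combination of product-state CDFs, shift and rescale each event so that the Berry--Esseen bound \eqref{BEbound} applies with the uniform constant $C_0M/(\sigma_{-}^3\sqrt N)$, and collapse the sum via the triangle inequality and $\sum_k\lambda_k=1$. No gaps.
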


\begin{proof}{\footnotesize
Firstly, note that $F_N(x)=P[R_N\leq x]=\sum_k\lambda_kP_k[R_N\leq x]$, where $P_k[R_N\leq x]$ is the CDF for the product state $\rho_k^{(N)}=\rho_{1,k}\otimes\dots\otimes \rho_{N,k}$. We have
\begin{eqnarray}\nonumber
&&\sup_{x\in\mathbb{R}}\left|F_N(x)-\sum_k\lambda_k\Phi\left(\frac{x-\mu_{N,k}}{\Sigma_{Nk}/\sqrt{N}}\right)\right|\\\nonumber
&=&\sup_{x\in\mathbb{R}}\left|P\left[\frac{1}{N}X^{(N)}\leq x\right]-\sum_k\lambda_k\Phi\left(\frac{x-\mu_{N,k}}{\Sigma_{Nk}/\sqrt{N}}\right)\right|\\\nonumber
&=&\sup_{x\in\mathbb{R}}\left|\sum_k\lambda_k\left(P_k\left[\frac{1}{N}X^{(N)}\leq x\right]-\Phi\left(\frac{x-\mu_{N,k}}{\Sigma_{Nk}/\sqrt{N}}\right)\right)\right|\\\nonumber
&\leq&\sup_{x\in\mathbb{R}}\sum_k\lambda_k\left|P_k\left[\frac{1}{N}X^{(N)}\leq x\right]-\Phi\left(\frac{x-\mu_{N,k}}{\Sigma_{Nk}/\sqrt{N}}\right)\right|\\\nonumber
&=&\sup_{x\in\mathbb{R}}\sum_k\lambda_k\left|P_k\left[\frac{1}{N}X^{(N)}\leq \frac{\Sigma_{N,k}}{\sqrt{N}}x+\mu_{N,k}\right]-\Phi(x)\right|\\\nonumber
&=&\sup_{x\in\mathbb{R}}\sum_k\lambda_k\left|P_k\left[\frac{X^{(N)}-N\mu_{N,k}}{\sqrt N\Sigma_{N,k}}\leq x\right]-\Phi(x)\right|\\\nonumber
&=&\sup_{x\in\mathbb{R}}\sum_k\lambda_k\left|P_k\left[S_{N,k}/\Sigma_{N,k}\leq x\right]-\Phi(x)\right|\\\nonumber
&\leq&\sum_k\lambda_k\frac{C_0M}{\sigma_{-}^3\sqrt N}\\\nonumber
&=&\frac{C_0M}{\sigma_{-}^3\sqrt N}.
\end{eqnarray}
The last inequality follows from (\ref{BEbound}).}
\end{proof}

Note that the bound (\ref{RTheorem}) does not depend on any structure/symmetry of the underlying input state. This is in contrast to the previous de Finetti-type representation theorems that heavily rely on symmetry, such as the permutational invariance.


\emph{Convergence rates and quantum game.}$-$ In this section we will show that entangled states can behave very differently in certain cases compared to separable states with the respect to the distribution of the relative frequency. To illustrate our findings we will define the problem as an information-theoretic game between two players, Alice and Bob.

Suppose that Alice performs some POVM and generates a random variable $X\in\{x_1,x_2,\dots\}$ is strictly unsharp, i.e. $\Var[x]_{\rho}\geq\sigma_{-}>0$ for all $\rho$. As previously, we assume that third moments are bounded by $M>0$. She asks Bob to supply her with inputs, and his goal is to make the relative frequency $R_N=\frac{1}{N}(X_1+\dots+ X_N)$ as close as possible to some pre-defined value $X_c$. More precisely, he will try to maximize the probability
\begin{equation}\label{win P}
P_{N}=P\left[|R_N-X_c|\leq\frac{\epsilon}{N^{\alpha}}\right],
\end{equation}
with $\epsilon,\alpha>0$ being fixed parameters. The parameter $\alpha$ quantifies the convergence rate of the relative frequency to the constant $X_c$. Our goal here is to show that the probability $P_N$ is negligible whenever $\alpha>1/2$ for all separable states. And indeed, the bound (\ref{RTheorem}) states that the distribution of $R_N$ is a mixture of Gaussians, therefore the error (as quantified by the convergence rate) cannot scale better than $1/\sqrt{N}$. We fix $\alpha>1/2$.
\begin{theorem}
\begin{equation}\label{SepBound}
P_{N}\leq\sqrt{\frac{2}{\pi}}\frac{\epsilon}{\sigma_{-}}\frac{1}{N^{\alpha-\frac{1}{2}}}+\frac{2C_0M}{\sigma_{-}^3\sqrt N}
\end{equation}
for all separable inputs.
\end{theorem}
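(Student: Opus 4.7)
The plan is to reduce the bound to Theorem~1 together with the elementary remark that a Gaussian with variance at least $\sigma_{-}^2/N$ cannot concentrate more than $O(1/\sqrt{N})$ of its mass in an interval of length $2\epsilon/N^{\alpha}$.

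First I would rewrite the winning probability as a difference of CDFs, $P_N = F_N(X_c+\epsilon/N^{\alpha}) - F_N((X_c-\epsilon/N^{\alpha})^{-})$, and apply Theorem~1 at the two endpoints. Each application costs $C_0 M/(\sigma_{-}^3\sqrt{N})$, so by the triangle inequality
\begin{equation*}
P_N \le \sum_k \lambda_k\!\left[\Phi\!\left(\tfrac{X_c+\epsilon/N^{\alpha}-\mu_{N,k}}{\Sigma_{Nk}/\sqrt{N}}\right)-\Phi\!\left(\tfrac{X_c-\epsilon/N^{\alpha}-\mu_{N,k}}{\Sigma_{Nk}/\sqrt{N}}\right)\right] + \frac{2C_0M}{\sigma_{-}^3\sqrt{N}}.
\end{equation*}

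Second, I would estimate each term in the convex combination as a Gaussian probability over an interval. For a normal random variable with mean $\mu_{N,k}$ and standard deviation $\Sigma_{Nk}/\sqrt{N}$, the density is bounded above by $\sqrt{N}/(\Sigma_{Nk}\sqrt{2\pi})$, so the mass of any interval of length $2\epsilon/N^{\alpha}$ is at most
\begin{equation*}
\frac{2\epsilon}{N^{\alpha}}\cdot\frac{\sqrt{N}}{\Sigma_{Nk}\sqrt{2\pi}} = \sqrt{\tfrac{2}{\pi}}\,\frac{\epsilon}{\Sigma_{Nk}\,N^{\alpha-1/2}} \le \sqrt{\tfrac{2}{\pi}}\,\frac{\epsilon}{\sigma_{-}\,N^{\alpha-1/2}},
\end{equation*}
where the last step uses the strict unsharpness assumption $\Sigma_{Nk}\ge\sigma_{-}$. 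Since the bound is uniform in $k$ and $\sum_k\lambda_k=1$, averaging over $k$ preserves it.

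Combining the two paragraphs immediately yields (\ref{SepBound}). There is no serious obstacle here: the content of the statement is already baked into Theorem~1, and the only thing to watch out for is the bookkeeping of the Gaussian-density constant $1/\sqrt{2\pi}$, which multiplies the interval length $2\epsilon/N^{\alpha}$ to produce the factor $\sqrt{2/\pi}$ in the first term of the bound.
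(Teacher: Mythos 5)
Your proposal is correct and follows essentially the same route as the paper: approximate the probability of the interval $[X_c-\epsilon/N^{\alpha},\,X_c+\epsilon/N^{\alpha}]$ by its Gaussian(-mixture) mass at a cost of $2C_0M/(\sigma_{-}^3\sqrt{N})$ from the Berry--Esseen/Theorem~1 bound applied at both endpoints, then bound that mass by $\sqrt{2/\pi}\,\epsilon/(\sigma_{-}N^{\alpha-1/2})$ uniformly in $k$. The only cosmetic differences are that the paper reduces to product states via $P_N=\sum_k\lambda_k P_{N,k}$ and invokes the Berry--Esseen inequality directly rather than Theorem~1, and it obtains the interval bound from the concavity of $\Phi$ ($\Phi(x+a)-\Phi(x-a)\leq 2\Phi(a)-1\leq\sqrt{2/\pi}\,a$) rather than from the supremum of the Gaussian density, which yields the identical constant.
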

\begin{proof}{\footnotesize For a separable input $\rho^{(N)}=\sum_k\lambda_k\rho_k^{(N)}$ we have $P_N=\sum_k\lambda_k P_{N,k}$. Therefore it is sufficient to prove (\ref{SepBound}) for a product state. We set $\rho^{(N)}=\rho_1\otimes\dots\otimes\rho_N$ and, as previously $\Sigma_N^2=\frac{1}{N}\sum_{i=1}^N\sigma_i^2$. We have
\begin{eqnarray}\nonumber
P_{N}&=&P\left[|R_N-X_c|\leq\frac{\epsilon}{N^{\alpha}}\right]\\\nonumber
&=&P\left[X_c-\frac{\epsilon}{N^{\alpha}}\leq\frac{1}{N}X^{(N)}\leq X_c+\frac{\epsilon}{N^{\alpha}}\right]\\\nonumber
&=&P\left[A_N-a_N\leq S_N/\Sigma_N\leq A_N+a_N\right]\\\nonumber
&\leq&\Phi(A_N+a_N)-\Phi(A_N-a_N)+\frac{2C_0M}{\sigma_{-}^3\sqrt N},
\end{eqnarray}
where $A_N=\frac{\sqrt{N}}{\Sigma_N}(X_c-\frac{1}{N}\avg{X^{(N)}})$ and $a_N=\frac{\epsilon}{\Sigma_N N^{\alpha-\frac{1}{2}}}$. The last inequality follows from the Berry-Esseen bound (\ref{BEbound}). For $a>0$ the function $\Phi(x+a)-\Phi(x-a)$ reaches its absolute maximum for $x=0$, hence $\Phi(x+a)-\Phi(x-a)\leq \Phi(a)-\Phi(-a)=2\Phi(a)-1$. Here, we used $\Phi(x)+\Phi(-x)=1$. Furthermore, the function $\Phi(x)$ is concave for $x\geq0$, therefore $\Phi(x)\leq\frac{1}{2}+\frac{x}{\sqrt{2\pi}}$. Finally, we have
\begin{eqnarray}\nonumber
P_N&\leq&\Phi(A_N+a_N)-\Phi(A_N-a_N)+\frac{2C_0M}{\sigma_{-}^3\sqrt N}\\\nonumber
&\leq&2\Phi(a_N)-1+\frac{2C_0M}{\sigma_{-}^3\sqrt N}\\\nonumber
&\leq&\sqrt{\frac{2}{\pi}}a_N+\frac{2C_0M}{\sigma_{-}^3\sqrt N}\\\nonumber
&=&\sqrt{\frac{2}{\pi}}\frac{\epsilon}{\Sigma_N N^{\alpha-\frac{1}{2}}}+\frac{2C_0M}{\sigma_{-}^3\sqrt N}\\\nonumber
&\leq&\sqrt{\frac{2}{\pi}}\frac{\epsilon}{\sigma_{-}}\frac{1}{N^{\alpha-\frac{1}{2}}}+\frac{2C_0M}{\sigma_{-}^3\sqrt N}\\\nonumber
\end{eqnarray}}
\end{proof}

The bound (\ref{SepBound}) states that the wining probability vanishes asymptotically $P_N\rightarrow0$ with $N\rightarrow+\infty$, for all $\alpha>1/2$. Therefore, Bob will fail to win the game with certainty by using separable inputs. Now we will provide a simple example where entanglement is able to beat the bound given by (\ref{SepBound}).

\emph{Entanglement example.}$-$ Consider a qubit three-outcome POVM with the elements on ``equilateral triangle'' $E_i=\frac{1}{3}( \openone+\vec{m}_i\cdot\vec{\sigma})$, where $\vec{m}_0=(1,0,0)^{T}$, $\vec{m}_{\pm1}=(-1/2,0,\pm\sqrt3/2)^{T}$ and $\vec{\sigma}=\{\sigma_x,\sigma_y,\sigma_z\}$ is the vector of three Pauli matrices. We define the corresponding random variable with three possible values $X \in\{-1,0,1\}$ and we set $X_c=0$. It is convenient to introduce two operators $A=\sum_ix_iE_i=-E_{-1}+E_{1}=\frac{1}{\sqrt3}\sigma_z$ and $B=\sum_ix_i^2E_i=E_{-1}+E_{1}=\frac{2}{3}\openone-\frac{1}{3}\sigma_x$. We have $\Var_{\rho}[X]=\avg{B}_{\rho}-\avg{A}_{\rho}^2=\frac{2}{3}-\frac{x}{3}+\frac{z^2}{3}$, where $x$ and $z$ are components of the Bloch vector of the state $\rho$. Clearly $x^2+z^2\leq1$. A simple calculation shows that $\Var_{\rho}[X]\geq\frac{1}{3}$, hence $\sigma_{-}=\frac{1}{3}$. Furthermore $|X|\leq1$, thus the third moment is bounded and we have $M=1$. The bound (\ref{SepBound}) applies to all separable inputs and $\alpha>1/2$.

On the other hand, let Bob use the following input state
\begin{equation}\label{Ansatz}
\ket{\psi}=\frac{1}{\sqrt{2L+1}}\sum_{m=-L}^{L}\ket{J,m},
\end{equation}
where we set $L=N^{\beta}$ with $0<\beta<1/2$. Here, we use the spin-$J$ representation for $N$-qubit permutational invariant pure state, i.e. any state can be written as $\sum_{m=-J/2}^{J/2}c_m\ket{J,m}$, with $J=N/2$ and $\ket{J,J}=\ket{1}^{\otimes N}$. The state (\ref{Ansatz}) is very closed to the Dicke-squeezed state~\cite{Duan} introduced for the purposes of quantum metrology. Clearly, the mean value $\avg{X^{(N)}}=\bra{\psi}\sum_{i=1}^{N}A_i\ket{\psi}=\frac{2}{\sqrt3}\bra{\psi}S_z\ket{\psi}=0$, where $S_z=\frac{1}{2}\sum_i\sigma_{z,i}$ is the total spin operator along $z$-direction. Keeping in mind that $\avg{X^{(N)}}=X_c=0$, we can lower-bound the winning probability by using the Chebyshev's inequality
\begin{eqnarray}\label{Chebyshev}
P_N&=&P\left[|\frac{1}{N}X^{(N)}-X_c|\leq\frac{\epsilon}{N^{\alpha}}\right]\\\nonumber
&=&P\left[|X^{(N)}-\avg{X^{(N)}}|\leq\frac{\epsilon}{N^{\alpha-1}}\right]\\\nonumber
&\geq&1-\frac{N^{2(\alpha-1)}}{\epsilon^2}\Var[X^{(N)}]\\
&=&1-s_N,
\end{eqnarray}
where $s_N=\frac{N^{2(\alpha-1)}}{\epsilon^2}\Var[X^{(N)}]$ upper-bounds the probability of failure. Our goal is to show that $s_N$ is negligible for $N$ large. A  simple calculation shows that $\Var[X^{(N)}]=\frac{N}{3}-\frac{2}{3}\bra{\psi}S_x\ket{\psi}+\frac{4}{3}\Delta S_{z}^{2}$. Firstly, we calculate $\Delta S_{z}^{2}$ directly by substituting (\ref{Ansatz})
\begin{eqnarray}\nonumber
\Delta S_{z}^{2}&=&\frac{1}{2L+1}\sum_{m=-L}^{L}m^2\\
&=&\frac{1}{3} L(L+1).
\end{eqnarray}
The wavefunction $\ket{\psi}$ is real (with the respect to basis $\ket{J,m}$), hence $\bra{\psi}S_x\ket{\psi}=\frac{1}{2}\bra{\psi}(S_{-}+S_{+})\ket{\psi}=\bra{\psi}S_{-}\ket{\psi}$, where $S_{-}$ is the spin-ladder operator\\ $S_{-}\ket{J,m}=\sqrt{J(J+1)-m(m-1)}\ket{J,m-1}$. We have
\begin{eqnarray}\nonumber
\bra{\psi}S_{-}\ket{\psi}&=&\frac{1}{2L+1}\sum_{m=-L+1}^{L}\sqrt{J(J+1)-m(m-1)}\\\nonumber
&=&\frac{\sqrt{J(J+1)}}{2L+1}\sum_{m=-L+1}^{L}\left(1-\frac{m(m-1)}{J(J+1)}\right)^{\frac{1}{2}}\\\nonumber
&\geq&\frac{J}{2L+1}\sum_{m=-L+1}^{L}\left(1-\frac{m(m-1)}{J(J+1)}\right)^{\frac{1}{2}}\\\nonumber
&\geq&\frac{J}{2L+1}\sum_{m=-L+1}^{L}\left(1-\frac{m(m-1)}{J(J+1)}\right)\\\nonumber
&=&\frac{J}{2L+1}\left(2L-\frac{2L(L^2-1)}{3J(J+1)}\right)\\\nonumber
&=&\frac{N L}{2L+1}-\frac{4L(L^2-1)}{3(2L+1)(N+2)}\\
&\geq&\frac{N L}{2L+1}-\frac{2L^2}{3(N+2)},
\end{eqnarray}
where we used $-\frac{L(L^2-1)}{2L+1}\geq-\frac{L^2}{2}$ for $L\geq0$. The second inequality follows from concavity of $\sqrt{1-x}$ for $0\leq x\leq1$. Now we can derive the bound for variance
\begin{eqnarray}
\Var[X^{(N)}]&=&\frac{N}{3}-\frac{2}{3}\bra{\psi}S_x\ket{\psi}+\frac{4}{3}\Delta S_{z}^{2}\\\nonumber
&=&\frac{N}{3}-\frac{2}{3}\bra{\psi}S_x\ket{\psi}+\frac{4}{9} L (L+1)\\\nonumber
&\leq&\frac{N}{3}+\frac{4}{9} L (L+1)-\frac{2}{3}\frac{N L}{2L+1}+\frac{4L^2}{9(N+2)}\\\nonumber
&=&\frac{4}{9} L (L+1)+\frac{N}{3(2L+1)}+\frac{4L^2}{9(N+2)}=Q.
\end{eqnarray}
For $L=N^{\beta}$ and $N$ being large, the right-hand side of the last inequality scales as $Q\sim\frac{4}{9}N^{2\beta}+\frac{1}{6}N^{1-\beta}+\frac{4}{9}N^{2\beta-1}$. Since $\beta<1/2$ the last therm is negligible. Furthermore, we see that the best rate is achieved for $\beta=1/3$. Finally we get the estimation for the maximal error $s_N\leq\frac{N^{2(\alpha-1)}}{\epsilon^2}Q\sim\frac{11N^{2\alpha-4/3}}{18\epsilon^2}$ which is negligible for all $\alpha<2/3$.

\emph{Concluding remarks.}$-$
In this letter we provided the central limit representation theorem for all separable states subjected to unsharp quantum measurements. We have shown that errors and convergence rates of the relative frequency are at best at the order of $1/\sqrt{N}$. This scaling factor comes exclusively due to the measurement uncertainty. On the other hand, we have shown that the better rates are achievable by using entangled inputs. This means that in certain cases entanglement can ``boost'' the convergence rates and suppress the measurement errors. Thus, our findings can be potentially used for quantum metrology purposes~\cite{Lloyd}. In addition, the framework developed here can be used for probabilistic entanglement detection. Namely, we have shown that the wining probability \eqref{win P} asymptotically reaches 1 for certain entangled states, whereas it asymptotically vanishes for all separable states, thus one can verify the presence of entanglement with a high probability even by using a single-copy of a target state (provided that $N$ is sufficiently large). This is in agreement with our recent conclusions on ``single-copy entanglement detection'' presented in \cite{DD17}.

\emph{Acknowledgments.}$-$The authors thank Nata\v sa Dragovi\'c and \v{C}aslav Brukner for helpful
comments and acknowledge support from the European Commission through the projects RAQUEL (No.\ 323970) and Serbian Ministry of Science (Project ON171035).

\end{document}